\newcommand{\ket}[1]{|#1\rangle}
\newcommand{\bra}[1]{\langle #1|}
\newcommand{\bracket}[2]{\langle #1|#2\rangle}
\newcommand{\ketbra}[1]{|#1\rangle\langle #1|}
\newcommand{\cket}[1]{|\widetilde{#1}\rangle}
\newtheorem*{theorem}{Decoupling Theorem}
\begin{document}
\author{Joseph M.~Renes$^1$ and Jean-Christian Boileau$^{2}$}
\affiliation{$^1$Institut f\"ur Angewandte Physik, Technische Universit\"at Darmstadt, Hochschulstr.~4a, 64289 Darmstadt, Germany\\
$^2$Center for Quantum Information and Quantum Control, University of Toronto, Toronto, ON, M5S 1A7 Canada}

\title{Conjectured Strong Complementary Information Tradeoff}

\begin{abstract}
We conjecture a new entropic uncertainty principle governing the entropy of complementary observations made on a system given side information in the form of quantum states, 
generalizing the entropic uncertainty relation of Maassen and Uffink [Phys.\ Rev.\ Lett.\ {\bf 60}, 1103 (1988)]. We prove a special case for certain 
conjugate observables 
by adapting a similar result found by Christandl and Winter pertaining to quantum channels [IEEE Trans.\ Inf.\ Theory {\bf 51}, 3159 (2005)], and
discuss possible applications of this result 
to the decoupling of quantum systems and for security analysis in quantum cryptography. 
\end{abstract}

\maketitle 

One of the central mysteries of quantum mechanics is complementarity, the strange 
phenomenon that a given physical attribute can only be exhibited at the expense of 
another, complementary, attribute. The canonical example, wave-particle duality, 
is illustrated in the double slit experiment. Coherent light (or matter) travelling 
through both slits produces an interference pattern, a wave-like property which is 
however destroyed if one determines which path has been taken, a particle-like property. 
Such behavior vividly differentiates quantum from classical mechanics and led Feynman to 
famously observe that the double slit experiment ``has in it the heart of 
quantum mechanics
''~\cite{feynman_feynman_1970}.

The new field of quantum information theory takes a pragmatic approach to the 
mysteries of quantum mechanics, seeking to better understand them by asking which 
information processing tasks can or cannot be accomplished in this new arena. The 
results have been stunning. Quantum information cannot be copied but can be ``teleported'' 
from place to place. It can be used to improve the precision of 
everything from clock synchronization to gravitational-wave detectors to lithography. 
It can dramatically speed up certain computational tasks, 
such as searching an unordered list and factoring large integers. 
Quantum information cannot, however, be shared between many parties.
For instance, maximal entanglement can be shared by only two parties, and entangling 
more parties means making the entanglement between any two of them weaker. 
This effect also enables cryptographic tasks which are impossible classically, such as unconditionally 
secure key exchange. This property of 
exclusiveness or privacy informs many 
aspects of how we reason about quantum information and quantum information 
processing~\cite{nielsen_quantum_2000}. 

The connection between complementarity and privacy stems from the entropic uncertainty relation due
to Maassen and Uffink~\cite{maassen_generalized_1988}, then successively extended by Hall~\cite{hall_information_1995} and Cerf \emph{et al}.~\cite{cerf_security_2002}. The original version
constrains the entropies of two noncommuting observables $O^A$ and $\widetilde{O}^A$ of a system $A$, and the latter versions extend
this to explicitly include classical side information about the observables, stored either jointly in one external system $R$ (Hall) or separately in two, $B$ and $E$ (Cerf \emph{et al.}). These external systems might, for example, be ancillary systems used in von Neumann measurement processes, possibly of $O^A$ or $\widetilde{O}^A$ or both.
Giving the $B$ and $E$ systems to parties Bob and Eve, respectively (the names are chosen in anticipation of the cryptographic implications to follow), the complementarity statement of Cerf \emph{et al}.~says that the information one party (Bob) could obtain about one observable ($O^A$) by measuring his system $B$, plus the information Eve could obtain about the other observable ($\widetilde{O}^A$) by measuring $E$, cannot exceed  a prescribed bound. 
Equivalently, one can say that there is a certain unavoidable amount of uncertainty or entropy about the two observables conditioned on respective measurements of the two systems $B$ and $E$. 

In this paper we generalize the tradeoff to restrict the amount of conditional entropy 
the parties can have about noncommuting observables on $A$ when they possess \emph{quantum} side information. Quantum and classical side information behave differently, and in particular the information represented by the quantum state may be significantly larger than the amount of classical information that can be extracted from it by measurement, a statement known as the Holevo bound~\cite{holevo_bounds_1973,nielsen_quantum_2000}. Relatedly, classical side information is subject to \emph{locking}, meaning that a modest amount of additional classical side information can greatly increase the total~\cite{divincenzo_locking_2004}. Quantum side information, in contrast, cannot be locked in this manner.
We find numerical evidence for the generalized tradeoff for arbitrary observables and provide a proof for conjugate observables~\footnote{Two observables are {\it conjugate} if each of the eigenvectors of one is an equal-weight superposition of all the eigenvectors of the other observable.} related by a Fourier transform  based on the proof of a related 
entropic inequality for quantum channels given by Christandl and 
Winter~\cite{christandl_uncertainty_2005}. 

We then exhibit a family of states which saturate the bound before discussing some applications of our result. Building on~\cite{christandl_uncertainty_2005, hayden_random_2008}, we derive a rigorous statement of the idea that if the $AB$ system has nearly maximal quantum correlations, as 
measured by appropriately small quantum conditional entropies,  then the $AE$ system is nearly decoupled, i.e.~in a product state. This is akin to the monogamy of entanglement, the
fact that maximal entanglement cannot be shared by more than two parties, at the level of quantum correlations.  
In the context of quantum cryptography, 
this means composable security---that the  key generated in quantum key distribution is secure in any further cryptographic application~\cite{ben-or_universal_2005,renner_universally_2005}---can be established 
by bounding the 
information that the (quantum) system $B$ has on a basis conjugate to the basis of $A$ used to 
encode the key.

{\em Basic entropic uncertainty principles.}---We begin by reviewing the existing entropic uncertainty principles. For a physical system $A$, consider any two observables $O^A$ and $\widetilde{O}^A$
represented by operators on a finite-dimensional Hilbert space. Let $c=\max_{jk}|\bracket{j}{\widetilde{k}}|$, where $\ket{j}$ and $\cket{k}$ are the eigenvectors of $O^A$ and $\widetilde{O}^A$, respectively. Define $H(O^A)_\rho$ and $H(\widetilde{O}^A)_\rho$ to be the respective Shannon entropies
of the outcome probabilities of the measurements of $O^A$ and $\widetilde{O}^A$ on a given state $\rho^A$. Maassen and Uffink~\cite{maassen_generalized_1988} showed that regardless of $\rho^A$,
\begin{align}
\label{eq:eur}
H(O^A)_\rho+H(\widetilde{O}^A)_\rho \geq -2\log_2 c,
\end{align}
meaning the entropies of  
observables sharing no common eigenstates cannot both be arbitrarily small. The constant $c$ 
can be as large as $\log_2 d^A$, for $d^A$ the dimension of the Hilbert space describing system 
$A$, and is exactly $\log_2 d^A$ if and only if the observables are \emph{conjugate}. 
For conjugate observables, certainty regarding one observable 
implies complete uncertainty regarding the other.

Suppose now that we have some side information or background information relevant to $A$, for instance 
the result of measurement on some external system $R$ which is correlated with $A$. Intuitively,
the entropic uncertainty relation should still hold, since this information would simply 
factor into the description $\rho^A$ of $A$. Indeed, the entropic uncertainty principle 
can be adapted to this case, and is equivalent to a result by Hall which he terms
the information-exclusion principle~\cite{hall_information_1995}. Its derivation proceeds as follows. 

Consider an arbitrary bipartite 
quantum state $\rho^{AR}$ where $A$ and $R$ are two finite dimensional Hilbert spaces. Let 
$\Gamma^R$ be a positive operator-valued measure (POVM) representing an arbitrary measurement on the system $R$ (i.e. 
$\sum_j \Gamma^R_j=\mathbbm{1}^R$ and $\Gamma^R_j \geq 0$ for all $j$). 
Measurement of $\Gamma^R$ gives the outcome $j$ with probability 
$q_j={\rm Tr}[\rho^{AR} \Gamma^R_j]$, and leaves the marginal state of $A$ given by 
$\rho^A_j:= \frac{1}{q_j}{\rm Tr}_R[\rho^{AR} \Gamma^R_j]$~\footnote{Note that Hall 
simply starts with the ensemble $\mathcal{E}=\{q_j, \rho^A_j\}$, whereas here it arises from
measurement of $R$.}. 
Applying the inequality Eq.~\ref{eq:eur} 
to each of the states $\rho_j$ gives $H (O^A)_{\rho_j}+H (\widetilde{O}^A)_{\rho_j} \geq -2\log_2 c$, where
the entropies are computed using the conditional state $\rho_j$. Since this state is determined by the
classical outcome of measurement on $R$, we can write $H(O^A)_{\rho_j}$ as $H(O^A|\Gamma_j^R)_\rho$ and likewise for observable $\widetilde{O}^A$, where $H(O^A|\Gamma_j^R)_\rho$ is the conditional entropy of the $O^A$ observable given the result of the $\Gamma^R$ measurement. Averaging over all outcomes yields the information exclusion principle:
\begin{align}
\label{eq:iep}
H (O^A | \Gamma^R)_\rho+H (\widetilde{O}^A|\Gamma^R)_\rho \geq -2\log_2 c.
\end{align}

This equation is particularly useful if we consider $R$ to be a composite system, consisting of subsystems $B$ and $E$, and $\Gamma^R$ to be a composite measurement $\Gamma^R_{jk}=\Lambda^B_j\otimes \widetilde{\Lambda}^E_k$, as put forth in~\cite{cerf_security_2002}. Since conditioning reduces entropy, one obtains a tradeoff in the amount of information about $O^A$ and $\widetilde{O}^A$ which can be simultaneously stored in \emph{separate} auxiliary systems $B$ and $E$. We call this the
(weak) complementary information tradeoff (CIT):
\begin{align}
\label{eq:iep2}
H (O^A | \Lambda^B)_\rho+H (\widetilde{O}^A|\widetilde{\Lambda}^E)_\rho \geq -2\log_2 c.
\end{align}
Now the information held by one party, in possession of system $B$, say, limits the information which another party could in
principle obtain about a noncommuting observable. This tradeoff is immediately applicable in quantum cryptography, and 
 in~\cite{renes_physical_2008} we used it to 
motivate a new approach to the distillation of entanglement and secret keys.
Our present goal is to find a stricter tradeoff.

{\it Strong Complementary Information Tradeoff}.---What if we regard the quantum state of the auxiliary system itself as the side information? Is there any limit to the uncertainty of complementary observables in this case? 
One might conjecture that the quantum version of Eq.\ (\ref{eq:iep}) holds, replacing the conditional Shannon 
entropy $H(O^A|\Gamma^R)$ with the conditional von Neumann 
entropy $S(O^A|R)_\rho=S(\rho^{AR}_{O^A})-S(\rho^R)$,
where $\rho^{AR}_{O^A}$ is the quantum state obtained after measuring the observable $O$ on the state $\rho$ and averaging over all outcomes. However, this is false in general.
To take an extreme example, the singlet state of two spin-$\frac{1}{2}$ particles is
perfectly anticorrelated in every basis, meaning that $S(O^A|R)=S(\widetilde{O}^A|R)=0$ for \emph{any}  nondegenerate observables. This is merely the statement that quantum correlations, in the form of 
entanglement, are stronger than classical correlations. 
 
Instead, the conjecture should be applied to the weak CIT, Eq.~(\ref{eq:iep2}), and the result is the strong CIT:
\begin{align}
\label{eq:cit}
S(O^A|B)_\rho+S(\widetilde{O}^A|E)_\rho\geq -2\log_2 c.
\end{align}
The strong CIT immediately implies the weak CIT via the Holevo bound $S(O^A|\Gamma^B)\geq S(O^A|B)$ for any measurement $\Gamma$~\cite{holevo_bounds_1973,nielsen_quantum_2000}, and also the original entropic uncertainty principle by taking $B$ and $E$ to be one-dimensional.

The claim is supported by 
numerical investigation of small dimensions $d_A,d_B,d_E\leq 12$, which has  
found no counterexample when testing at least 2000 random states in each of the $11^3$ combinations of 
dimensions. By itself this is relatively weak evidence, but for conjugate observables 
$O^A, \widetilde{O}^A$ related by a Fourier transform, e.g.~$\widetilde{O}^A=F^AO^AF^{A\dagger}$, Eq.~(\ref{eq:cit}) follows from strong subadditivity (SSA) of the von Neumann entropy. Assuming that the eigenvectors of 
$O^A$ define a standard basis, we can redefine 
the eigenvalues of the observables so that $O\rightarrow Z=\sum_k \omega^{k}\ketbra{k}$ and $\widetilde{O}\rightarrow X=\sum_k \ket{k{+}1}\bra{k}$, where $\omega=e^{2\pi i/d}$, the generalized Pauli operators~\footnote{These operators
are not Hermitian, but as only the eigenvectors concern us here, this is of no consequence.}.
Then the various properties of $X$ and $Z$ can be used to construct a proof, in a manner entirely 
similar to~\cite{christandl_uncertainty_2005}, who establish a similar tradeoff for the ability of a quantum channel to transmit conjugate information. 
In the present context, the proof goes as follows.

\begin{proof}[Proof of special case]
Consider any $\rho^{ABE}$ where ${\rm dim}(A)=d$. We can assume that $\rho^{ABE}$ is a pure state without loss of generality, since $E$ can always be redefined to include the purification. 
As a consequence of SSA, the value of $S(O^A| E)_\rho$ cannot be 
increased by enlarging $E$ (cf.~\cite{nielsen_quantum_2000}, Theorem 11.15). 
So if the inequality is true for any pure state then 
it must also be true for any mixed state. Using the properties of $X$ and $Z$ one can write 
$\bar{\rho}^{ABE}_{X^A} = \frac{1}{d}\sum_kX_k^A\rho^{ABE}X_k^{\dagger A}$ and 
$\bar{\rho}^{ABE}_{Z^A} = \frac{1}{d}\sum_kZ_k^A\rho^{ABE}Z_k^{\dagger A}$. Here we have used a 
nonstandard notation, defining $X^A_k:=(X^A)^k$ and $Z^A_k:=(Z^A)^k$.
Now let $\rho_{jk}^{ABE}:=X_j^AZ_k^A\rho^{ABE}Z_k^{\dagger A}X_j^{\dagger A}$ and 
define 
\begin{align}
\Omega^{A'\!B'\!AB}:=\frac{1}{d^2}\sum_{jk} P_j^{A'}\!\otimes P_k^{B'}\!\otimes \rho^{AB}_{jk},
\end{align}
for $P_j=\ketbra{j}$. $A'$ and $B'$ are two new systems such that ${\rm dim}(A')={\rm dim}(B')=d$. The sum of $j$ and $k$ is understood to be over all values from 1 to $d$.  Direct calculation shows that 
\begin{align}
S(A'|AB)_\Omega &= S(\bar{\rho}_{Z^A}^{AB})-S(\rho^B)=S(Z^A|B)_\rho\\
S(B'|AB)_\Omega &= S(\bar{\rho}_{X^A}^{AB})-S(\rho^B)=S(X^A|B)_\rho\\
S(A'B'|AB)_\Omega&=\log_2 d+S(A|B)_\rho.
\end{align}

Strong subadditivity is just the statement that $S(A'B'|AB)\leq S(A'|AB)+S(B'|AB)$ (cf.~\cite{nielsen_quantum_2000}, Theorem 11.16), so
\begin{align}
S(Z^A|B)_ \rho+S(X^A|B)_ \rho \geq \log_2 d+S(A|B)_ \rho.
\end{align}

Define the probability distribution $p_k$ and quantum states $\ket{\varphi_k}^{BE}$ such that
$\ket{\rho}^{ABE}= \sum_{k} \sqrt{p_k} \ket{k}^A \ket{\varphi_k}^{BE}$. Using $S(\bar{\rho}^{AB}_{Z^A})=H(p_k)+\sum_k p_k S(\varphi_k^B)=H(p_k)+\sum_k p_k S(\varphi_k^E)=S(\bar{\rho}^{AE}_{Z^A})$, a simple calculation reveals that $S(Z^A|B)_\rho-S(Z^A|E)_\rho=S(A|B)_\rho$ and hence
for an arbitrary pure $\rho^{ABE}$,
\begin{align*}
S(X^A|B)_\rho +S(Z^A|E)_\rho &\geq \log_2 d.\qedhere
\end{align*}
\end{proof} 

{\it Saturating the Bound.}---
Since the bound relies solely on SSA, saturating the bound means fulfilling the SSA
equality conditions. A useful form of these is given in~\cite{hayden_structure_2004}, which states in the present case that
the $AB$ state space must decompose as $\mathcal{H}^{AB}\simeq \bigoplus_{s}\mathcal{H}^{L_s}\otimes \mathcal{H}^{R_s}$, so that 
\begin{align}
\Omega^{A'B'AB}=\bigoplus_s r_s \sigma^{A'L_s}\otimes\omega^{B'R_s}
\end{align}
for some states $\sigma^{A'L_s}$, $\omega^{B'R_s}$ and probabilities $r_s$. Projecting $A'B'$ onto the $jk$th outcome gives  
$\rho_{jk}=d^2\bigoplus_s r_s \sigma_j^{L_s}\otimes \omega_k^{R_s}$, where $\sigma_j^{L_s}={\rm Tr}[P_j^{A'}\sigma^{A'L_s}]$ and similarly for $\omega_k^{R_s}$.
Thus, the action of $X_j^A$ and $Z_k^A$ on $\rho^{AB}$ must be on different subsystems within each $s$-sector. 

One way to arrange for this is to take $\rho^{AB}$ to be one of the Bell-states $\ket{\Phi_{jk}}^{AB}=\frac{1}{\sqrt{2}}X_j^BZ_k^B(\ket{00}+\ket{11})^{AB}$. Then there is only one $s$-sector, and the spaces $\mathcal{H}^{L}$, $\mathcal{H}^R$ are two-dimensional, so that $\ket{\Phi_{jk}}\simeq \ket{jk}$.
Bell states saturate the bound in the most trivial manner possible: both $S(X^A|B)$ and $S(Z^A|B)=0$. 

A more interesting example is afforded by the state 
$\ket{\psi}^{ABE}=\frac{1}{\sqrt{d}}\sum_k\ket{k}^A\ket{\varphi_k}^{BE}$, where we set   
$\ket{\varphi_k}^{BE}=\sum_{uv}\sqrt{q_{uv}}\ket{u}^B\ket{\eta_{uv}}^{E_1}Z_k^{E_2}\ket{v}^{E_2}$ 
with arbitrary states $\ket{\eta_{u,v}}$ and distribution $q_{uv}$. 
Here the entropies $S(X^A|B)$ and $S(Z^A|E)$ do not necessarily take on 
extremal values, but their counterparts $S(Z^A|B)$ and $S(X^A|E)$ do. 
For starters, $S(Z^A|B)=\log_2 d$ since $\varphi_k^B$ is independent of $k$. Meanwhile, 
$S(X^A|E)=0$ can be quickly derived by making the substitution 
$\ket{k}^AZ_k^{E_2}\ket{v}^{E_2}=Z_v^A\ket{k}^A\ket{v}^{E_2}$ in the 
definition of $\ket{\psi}$. Thus, the $\rho^{AB}_{jk}$ derived from $\ket{\psi}$ 
meet the equality conditions, and therefore $S(X^A|B)+S(Z^A|E)= \log_2 d$. 

In more concrete terms, the $\rho^{AB}_{jk}$ meet the equality conditions because 
$\psi^{AB}=\sum_v q_v \widetilde{P}_v^A\otimes \xi_v^B$, with $\xi_v^B=\sum_{uu'}\sqrt{q_{u|v}q_{u'|v}}\bracket{\eta_{uv}}{\eta_{u'v}}\ket{u}\bra{u'}^B$. Thus, the action
of $X_j^A$ has no effect on $\psi^{AB}$, as it is already diagonal in the $X^A$ basis.
Therefore we need only keep one $s$-sector and can  
dispense entirely with $\mathcal{H}^{L_s}$ in the decomposition of $\rho^{AB}_{jk}$, 
setting $\mathcal{H}^{R_s}=\mathcal{H}^{AB}$. It remains an open question whether any state 
can saturate the bound $S(X^A|B)+S(Z^A|E)\geq \log_2 d$ and its counterpart 
$S(Z^A|B)+S(X^A|E)\geq \log_2 d$ without taking on extremal values in either case.  

{\it Privacy Criterion.}---An immediate application of the strong CIT is in bounding the correlations between two systems $A$ and $E$, possessed by Alice and Eve, respectively, using the known correlations between $A$ and $B$, possessed by Bob. The weak form can also be used for this purpose, but the types of correlations that can be bounded are weaker as we now explain.

Consider the state $\rho^{ABE}$ and suppose that there exists a measurement $\widetilde{\Lambda}^B$ such that $H(\widetilde{O}^A|\widetilde{\Lambda}^B)_\rho\leq \epsilon$. This implies $H(O^A|{\Lambda}^E)_\rho\geq -2\log_2c-\epsilon$, or equivalently, $I(O^A{:}\Lambda^E)\leq \epsilon+H(O^A)_\rho+2\log_2c$. Supposing further that
$H(O^A)_\rho\leq -2\log_2 c$, as would necessarily be the case for conjugate observables, we obtain a bound
on Eve's information about $O^A$: $I(O^A{:}\Lambda^E)\leq \epsilon$. Applied to a quantum key distribution scenario where Alice's key is given by the measurement of the observable $O^A$, this
ensures a certain level of privacy of the key~\cite{cerf_security_2002}. However, due to locking, this security criterion 
is not {\it universally composable}~\cite{koenig_small_2007}, meaning that the key cannot be safely used in arbitrary further cryptographic protocols. For a precise definition of universal composability and an exhaustive explanation on why a suitable security criterion should be composable, see~\cite{ben-or_universal_2005,renner_universally_2005}.

On the other hand, the strong CIT \emph{can} be used to obtain a composable security criterion. The same conditions as above now imply that the eavesdropper's Holevo information is small, $I(O^A{:}E)\leq \epsilon$, which is a composable security criterion~\cite{ben-or_universal_2005}.
We can use the strong CIT to give an even more direct statement, in the form of sufficient conditions for decoupling Alice from Eve.
\begin{theorem}
Suppose $\rho^{ABE}$ is a tripartite state subject to the conditions $S(\rho^A)\leq -2\log_2 c$, $S(O^A|  B)_\rho \leq \epsilon_1$, and $S(\widetilde{O}^A|  B)_\rho \leq \epsilon_2$. Then  
 \begin{align}
 {\rm Tr}\left|\rho^{AE}-\rho^A \otimes \rho^E\right| \leq 2 \sqrt{\epsilon_1 + \epsilon_2}.
 \end{align}
\end{theorem}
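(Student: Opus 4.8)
The plan is to turn the three hypotheses into a single bound on the mutual information $I(A{:}E)_\rho$ and then appeal to Pinsker's inequality. First I would note that it suffices to treat a pure $\rho^{ABE}$: purify with a fresh system $R$, set $E'=ER$, prove the bound for the enlarged environment $E'$, and recover the stated bound for the original $E$ from monotonicity of the trace norm under the partial trace over $R$. The hypotheses $S(\rho^A)\le-2\log_2 c$, $S(O^A|B)_\rho\le\epsilon_1$ and $S(\widetilde{O}^A|B)_\rho\le\epsilon_2$ are untouched by this reduction, since none of them refers to $E$, and Eq.~(\ref{eq:cit}) is assumed to hold for every state, in particular for the enlarged $\rho^{ABE'}$.

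The crux is an entropy-duality identity, already implicit in the proof of the special case above: for pure $\rho^{ABE}$ and any (nondegenerate) observable $O^A$,
\[
S(O^A|B)_\rho-S(O^A|E)_\rho=S(A|B)_\rho .
\]
To see it, expand $\ket{\rho}^{ABE}=\sum_k\sqrt{p_k}\,\ket{k}^A\ket{\varphi_k}^{BE}$ in the eigenbasis of $O^A$. Dephasing $A$ in that basis leaves $\rho^B$ and $\rho^E$ unchanged, while $S(\bar\rho^{AB}_{O^A})=H(p)+\sum_k p_k\,S(\varphi_k^B)=H(p)+\sum_k p_k\,S(\varphi_k^E)=S(\bar\rho^{AE}_{O^A})$ because each $\varphi_k^{BE}$ is pure; subtracting the defining expressions for the two conditional entropies and using $S(\rho^E)-S(\rho^B)=S(A|B)_\rho$ (purity of $\rho^{ABE}$) gives the identity.

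Applying it to $\widetilde{O}^A$ gives $S(\widetilde{O}^A|E)_\rho\le\epsilon_2-S(A|B)_\rho$. Combined with $S(O^A|B)_\rho\le\epsilon_1$, the strong CIT Eq.~(\ref{eq:cit}) reads $\epsilon_1+\epsilon_2-S(A|B)_\rho\ge-2\log_2 c$, i.e.\ $S(A|B)_\rho\le\epsilon_1+\epsilon_2+2\log_2 c$. For pure $\rho^{ABE}$ one has $I(A{:}E)_\rho=S(\rho^A)+S(A|B)_\rho$, so the hypothesis $S(\rho^A)\le-2\log_2 c$ yields $I(A{:}E)_\rho\le\epsilon_1+\epsilon_2$. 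Since $I(A{:}E)_\rho=D(\rho^{AE}\|\rho^A\otimes\rho^E)$ and Pinsker's inequality in the form $D(\sigma\|\tau)\ge\frac14({\rm Tr}|\sigma-\tau|)^2$ holds, we obtain
\[
{\rm Tr}\left|\rho^{AE}-\rho^A\otimes\rho^E\right|\le 2\sqrt{I(A{:}E)_\rho}\le 2\sqrt{\epsilon_1+\epsilon_2}.
\]

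The one genuinely delicate point is the middle bridge: the hypotheses constrain \emph{both} observables relative to Bob, whereas the strong CIT needs a constraint on $\widetilde{O}^A$ relative to Eve, and it is precisely the duality identity---valid only after passing to a purification---that converts the one into the other; everything else is bookkeeping. As a caveat, Eq.~(\ref{eq:cit}) is established above unconditionally only for Fourier-conjugate $O^A,\widetilde{O}^A$, in which case $-2\log_2 c=\log_2 d$ and the hypothesis $S(\rho^A)\le-2\log_2 c$ is automatic; in full generality the theorem is as conjectural as the strong CIT itself.
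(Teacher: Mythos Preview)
Your argument is correct and essentially mirrors the paper's own proof: both reduce to a pure $\rho^{ABE}$ by monotonicity of the trace distance, use the duality identity $S(\widetilde{O}^A|E)_\rho=S(\widetilde{O}^A|B)_\rho-S(A|B)_\rho$ (equivalently, the paper's $S(\widetilde{O}^A|E)_\rho=S(A|E)_\rho+S(\widetilde{O}^A|B)_\rho$, since $S(A|B)_\rho=-S(A|E)_\rho$ for pure states) to feed the $B$-side hypotheses into the strong CIT, obtain $I(A{:}E)_\rho\le\epsilon_1+\epsilon_2$, and finish with Pinsker's inequality. The only differences are cosmetic---you route through $S(A|B)_\rho$ where the paper routes through $S(A|E)_\rho$---and your explicit caveat about the conjectural status of Eq.~(\ref{eq:cit}) for non-Fourier observables is apt.
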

\begin{proof}
The proof assumes that the strong CIT holds for the observables $O^A$ and $\widetilde{O}^A$. Observe that the mixed state case follows from the pure state case since the trace distance cannot increase when removing the purifying system. Thus, we can assume that $\rho^{ABE}$ is a pure state. Then a straightforward calculation reveals that $S(\widetilde{O}^A|E)_\rho = S(A|E)_\rho+S(\widetilde{O}^A|B)_\rho$, using the fact that given the value of $\widetilde{O}^A$, the entropy of Eve's state is identical to the entropy of Bob's state. 
Using this to substitute for $S(\widetilde{O}^A|E)_\rho$ in the strong complementary information tradeoff, the three given conditions yield
$S(\rho^A)- S(A|E)_\rho \leq \epsilon_1+\epsilon_2$. This can be written as $S(\rho^{AE}|| \rho^A \otimes \rho^E) \leq \epsilon_1+\epsilon_2$, and since the relative entropy and the trace distance are related by $({\rm Tr}|\rho^{AE}- \rho^A \otimes \rho^E|)^2 \leq 4S(\rho^{AE}|| \rho^A \otimes \rho^E)$~\cite{ohya_quantum_1993}, this concludes the proof. 
\end{proof}
This theorem makes rigorous the intuition that full quantum correlations, in 
the sense of small quantum conditional entropy, between two systems $A$ and $B$ is equivalent to
being decoupled from any other system $E$. The same intuition has different, though related, rigorous expressions. When thinking of quantum correlations as entanglement, this goes under the heading 
of {monogamy of entanglement}~\cite{koashi_monogamy_2004}.   
Or, instead of using quantum mutual information, one can imagine
there exist measurements on $B$ which could predict both the outcome of $X^A$ and $Z^A$, and
the same sort of decoupling result holds~\cite{koashi_complementarity_2007,renes_physical_2008}.

{\it Conclusion.}---We have proposed a tradeoff in the amount of information simultaneously available about complementary observables, formulated in terms of the quantum conditional entropy. It can be seen as the
natural extension of the reformulation by Cerf \emph{et al.} of the information exclusion principle and the entropic uncertainty principle. It is also the ``static'' version, applicable to quantum states, of Christandl and Winter's ``dynamic'' conjugate information tradeoff, which is formulated for quantum channels. The proof of the latter leads immediately to a proof of the strong complementary information tradeoff, and numerical investigation reveals that the tradeoff appears to hold for arbitrary observables. 
We have also discussed conditions under which the tradeoff can be saturated, as well as described some applications to quantum cryptography and derived a decoupling criterion for quantum states.

It would be interesting to determine if a similar bound holds for the smoothed 
conditional min- and/or max-entropies, which are generalizations of the classical Renyi entropes of order $\infty$ and 1/2, respectively, and have direct operational interpretations~\cite{Koenig_operational_2008}. They are often relevant in studying information processing protocols at the single-shot rather than asymptotic level, and are therefore more fundamental. Note that the original derivation of Maassen and Uffink already gives the unconditional result $H_{\rm min}+H_{\rm max}\geq -2\log_2 c$.

{\it Acknowledgment.}---The authors acknowledge fruitful discussions with Gernot Alber, Matthias Christandl, Robert K\"onig, and Stephanie Wehner. JCB received support from NSERC and Quantumworks.

\bibliography{refs}

\begin{thebibliography}{18}
\expandafter\ifx\csname natexlab\endcsname\relax\def\natexlab#1{#1}\fi
\expandafter\ifx\csname bibnamefont\endcsname\relax
  \def\bibnamefont#1{#1}\fi
\expandafter\ifx\csname bibfnamefont\endcsname\relax
  \def\bibfnamefont#1{#1}\fi
\expandafter\ifx\csname citenamefont\endcsname\relax
  \def\citenamefont#1{#1}\fi
\expandafter\ifx\csname url\endcsname\relax
  \def\url#1{\texttt{#1}}\fi
\expandafter\ifx\csname urlprefix\endcsname\relax\def\urlprefix{URL }\fi
\providecommand{\bibinfo}[2]{#2}
\providecommand{\eprint}[2][]{\url{#2}}

\bibitem[{\citenamefont{Feynman}(1970)}]{feynman_feynman_1970}
\bibinfo{author}{\bibfnamefont{R.~P.} \bibnamefont{Feynman}},
  \emph{\bibinfo{title}{Feynman Lectures On Physics}}
  (\bibinfo{publisher}{Addison Wesley Longman}, \bibinfo{year}{1970}).

\bibitem[{\citenamefont{Nielsen and Chuang}(2000)}]{nielsen_quantum_2000}
\bibinfo{author}{\bibfnamefont{M.~A.} \bibnamefont{Nielsen}} \bibnamefont{and}
  \bibinfo{author}{\bibfnamefont{I.~L.} \bibnamefont{Chuang}},
  \emph{\bibinfo{title}{Quantum Computation and Quantum Information}}
  (\bibinfo{publisher}{Cambridge University Press}, \bibinfo{year}{2000}).

\bibitem[{\citenamefont{Maassen and Uffink}(1988)}]{maassen_generalized_1988}
\bibinfo{author}{\bibfnamefont{H.}~\bibnamefont{Maassen}} \bibnamefont{and}
  \bibinfo{author}{\bibfnamefont{J.~B.~M.} \bibnamefont{Uffink}},
  \bibinfo{journal}{Phys. Rev. Lett.} \textbf{\bibinfo{volume}{60}},
  \bibinfo{pages}{1103} (\bibinfo{year}{1988}).

\bibitem[{\citenamefont{Hall}(1995)}]{hall_information_1995}
\bibinfo{author}{\bibfnamefont{M.~J.~W.} \bibnamefont{Hall}},
  \bibinfo{journal}{Phys. Rev. Lett.} \textbf{\bibinfo{volume}{74}},
  \bibinfo{pages}{3307} (\bibinfo{year}{1995}).

\bibitem[{\citenamefont{Cerf et~al.}(2002)\citenamefont{Cerf, Bourennane,
  Karlsson, and Gisin}}]{cerf_security_2002}
\bibinfo{author}{\bibfnamefont{N.~J.} \bibnamefont{Cerf}},
  \bibnamefont{et al.},
  \bibinfo{journal}{Phys. Rev. Lett.} \textbf{\bibinfo{volume}{88}},
  \bibinfo{pages}{127902} (\bibinfo{year}{2002}).

\bibitem[{\citenamefont{Holevo}(1973)}]{holevo_bounds_1973}
\bibinfo{author}{\bibfnamefont{A.~S.} \bibnamefont{Holevo}},
  \bibinfo{journal}{Probl. Peredachi Inf.} \textbf{\bibinfo{volume}{9}},
  \bibinfo{pages}{3} (\bibinfo{year}{1973}).

\bibitem[{\citenamefont{{DiVincenzo} et~al.}(2004)\citenamefont{{DiVincenzo},
  Horodecki, Leung, Smolin, and Terhal}}]{divincenzo_locking_2004}
\bibinfo{author}{\bibfnamefont{D.~P.} \bibnamefont{{DiVincenzo}}},
  \bibnamefont{et al.},
  \bibinfo{journal}{Phys. Rev. Lett.}
  \textbf{\bibinfo{volume}{92}}, \bibinfo{pages}{067902}
  (\bibinfo{year}{2004}).

\bibitem[{\citenamefont{Christandl and
  Winter}(2005)}]{christandl_uncertainty_2005}
\bibinfo{author}{\bibfnamefont{M.}~\bibnamefont{Christandl}} \bibnamefont{and}
  \bibinfo{author}{\bibfnamefont{A.}~\bibnamefont{Winter}},
  \bibinfo{journal}{IEEE Trans. Inf. Theory} \textbf{\bibinfo{volume}{51}},
  \bibinfo{pages}{3159} (\bibinfo{year}{2005}).

\bibitem[{\citenamefont{Hayden et~al.}(2008)\citenamefont{Hayden, Shor, and
  Winter}}]{hayden_random_2008}
\bibinfo{author}{\bibfnamefont{P.}~\bibnamefont{Hayden}},
  \bibinfo{author}{\bibfnamefont{P.~W.} \bibnamefont{Shor}}, \bibnamefont{and}
  \bibinfo{author}{\bibfnamefont{A.}~\bibnamefont{Winter}},
  \bibinfo{journal}{Open Syst. Inf. Dyn.} \textbf{\bibinfo{volume}{15}},
  \bibinfo{pages}{71} (\bibinfo{year}{2008}).

\bibitem[{\citenamefont{Ben-Or et~al.}(2005)\citenamefont{Ben-Or, Horodecki,
  Leung, Mayers, and Oppenheim}}]{ben-or_universal_2005}
\bibinfo{author}{\bibfnamefont{M.}~\bibnamefont{Ben-Or}},
  \bibnamefont{et al.}, in
  \emph{\bibinfo{booktitle}{Second Theory of Cryptography Conference}}
  (\bibinfo{publisher}{Springer}, \bibinfo{address}{Cambridge, MA},
  \bibinfo{year}{2005}), vol. \bibinfo{volume}{3378}, pp.
  \bibinfo{pages}{386--406}.

\bibitem[{\citenamefont{Renner and K{\"o}nig}(2005)}]{renner_universally_2005}
\bibinfo{author}{\bibfnamefont{R.}~\bibnamefont{Renner}} \bibnamefont{and}
  \bibinfo{author}{\bibfnamefont{R.}~\bibnamefont{K{\"o}nig}}, in
  \emph{\bibinfo{booktitle}{Second Theory of Cryptography Conference}}
  (\bibinfo{publisher}{Springer}, \bibinfo{address}{Cambridge, MA},
  \bibinfo{year}{2005}), vol. \bibinfo{volume}{3378}, pp.
  \bibinfo{pages}{407--425}.

\bibitem[{\citenamefont{Renes and Boileau}(2008)}]{renes_physical_2008}
\bibinfo{author}{\bibfnamefont{J.~M.} \bibnamefont{Renes}} \bibnamefont{and}
  \bibinfo{author}{\bibfnamefont{J.-C.} \bibnamefont{Boileau}},
  \bibinfo{journal}{Phys. Rev. A} \textbf{\bibinfo{volume}{78}},
  \bibinfo{pages}{032335} (\bibinfo{year}{2008}).

\bibitem[{\citenamefont{Hayden et~al.}(2004)\citenamefont{Hayden, Jozsa, Petz,
  and Winter}}]{hayden_structure_2004}
\bibinfo{author}{\bibfnamefont{P.}~\bibnamefont{Hayden}},
  \bibnamefont{et al.},
   \bibinfo{journal}{Comm. Math. Phys.} \textbf{\bibinfo{volume}{246}},
  \bibinfo{pages}{359} (\bibinfo{year}{2004}).

\bibitem[{\citenamefont{K{\"o}nig et~al.}(2007)\citenamefont{K{\"o}nig, Renner,
  Bariska, and Maurer}}]{koenig_small_2007}
\bibinfo{author}{\bibfnamefont{R.}~\bibnamefont{K{\"o}nig}},
  \bibnamefont{et al.},
  \bibinfo{journal}{Phys. Rev. Lett.} \textbf{\bibinfo{volume}{98}},
  \bibinfo{pages}{140502} (\bibinfo{year}{2007}).

\bibitem[{\citenamefont{Ohya and Petz}(1993)}]{ohya_quantum_1993}
\bibinfo{author}{\bibfnamefont{M.}~\bibnamefont{Ohya}} \bibnamefont{and}
  \bibinfo{author}{\bibfnamefont{D.}~\bibnamefont{Petz}},
  \emph{\bibinfo{title}{Quantum Entropy and Its Use}}
  (\bibinfo{publisher}{Springer-Verlag}, \bibinfo{year}{1993}).

\bibitem[{\citenamefont{Koashi and Winter}(2004)}]{koashi_monogamy_2004}
\bibinfo{author}{\bibfnamefont{M.}~\bibnamefont{Koashi}} \bibnamefont{and}
  \bibinfo{author}{\bibfnamefont{A.}~\bibnamefont{Winter}},
  \bibinfo{journal}{Phys. Rev. A} \textbf{\bibinfo{volume}{69}},
  \bibinfo{pages}{022309} (\bibinfo{year}{2004}).

\bibitem[{\citenamefont{Koashi}(2007)}]{koashi_complementarity_2007}
\bibinfo{author}{\bibfnamefont{M.}~\bibnamefont{Koashi}},
  \bibinfo{journal}{arXiv:0704.3661v1 [quant-ph]}  (\bibinfo{year}{2007}).

\bibitem[{\citenamefont{K{\"o}nig et~al.}(2008)\citenamefont{K{\"o}nig, Renner,
  and Schaffner}}]{Koenig_operational_2008}
\bibinfo{author}{\bibfnamefont{R.}~\bibnamefont{K{\"o}nig}},
  \bibinfo{author}{\bibfnamefont{R.}~\bibnamefont{Renner}}, \bibnamefont{and}
  \bibinfo{author}{\bibfnamefont{C.}~\bibnamefont{Schaffner}},
  \bibinfo{journal}{arXiv:0807.1338v1 [quant-ph]}  (\bibinfo{year}{2008}).

\end{thebibliography}
\bibliographystyle{apsrev}

\end{document}